\def\be{\begin{equation}}
\def\ee{\end{equation}}
\def\ba{\begin{array}}
\def\ea{\end{array}}
\def\qed{\leavevmode\unskip\penalty9999 \hbox{}\nobreak\hfill
     \quad\hbox{\leavevmode  \hbox to.77778em{%
               \hfil\vrule   \vbox to.675em%
               {\hrule width.6em\vfil\hrule}\vrule\hfil}}
     \par\vskip3pt}
\newtheorem{theorem}{Theorem}
\begin{document}
\title{\large\bf Sharing Quantum Nonlocality in Star Network Scenarios}

\author{ Tinggui Zhang$^{1,2 \dag}$ Naihuan Jing$^{2,3}$ and Shao-Ming Fei$^{1,4,5}$}
\affiliation{ ${1}$ School of Mathematics and Statistics, Hainan Normal University, Haikou, 571158, China \\
$2$ College of Sciences, Shanghai University, Shanghai, 200444, China \\
$3$ Department of Mathematics, North Carolina State University, Raleigh, NC27695, USA \\
$4$ School of Mathematical Sciences, Capital Normal University, Beijing 100048, China \\
$5$ Max-Planck-Institute for Mathematics in the Sciences, Leipzig 04103, Germany\\
$^{\dag}$ Correspondence to tinggui333@163.com}

\bigskip
\date{2022.10.12}
\begin{abstract}
The Bell nonlocality is closely related to the foundations of
quantum physics and has significant applications to security
questions in quantum key distributions. In recent years, the sharing
ability of the Bell nonlocality has been extensively studied. The
nonlocality of quantum network states is more complex. We first
discuss the sharing ability of the simplest bilocality under
unilateral or bilateral POVM measurements, and show that the
nonlocality sharing ability of network quantum states under
unilateral measurements is similar to the Bell nonlocality sharing
ability, but different under bilateral measurements. For the star
network scenarios, we present for the first time comprehensive
results on the nonlocality sharing properties of quantum network
states, for which the quantum nonlocality of the network quantum
states has a stronger sharing ability than the Bell nonlocality.
\end{abstract}

\pacs{03.67.-a, 02.20.Hj, 03.65.-w} \maketitle

\section{Introduction}
Raised by Einstein et al. \cite{abnr} in discussions on the
incompleteness of quantum mechanics in 1935, quantum Bell
nonlocality \cite{hczg} has been extensively studied and played
significant roles in a variety of quantum tasks \cite{ndsv}. Recent
studies have shown that quantum nonlocality is also a unique quantum
resource for some device-independent quantum tasks such as the key
distribution \cite{aanb}, random expansion \cite{spaa}, random
amplification \cite{rcol} and the related experiments
\cite{mxws,wlml,lkyz}.

In 2015 the authors in Ref. \cite{rsng} studied the fundamental
limits on the shareability of the Bell nonlocality with many
independent observers \cite{jsbe,chsh}, by asking whether a single
pair of entangled qubits could generate a long sequence of nonlocal
correlations under sequential measurements on one of the qubits.
This phenomenon is known as the sharing ability of quantum
nonlocality. It means that given an initial bipartite quantum state
with nonlocality, if the nonlocality can be kept under how many
unilateral or bilateral measurements. It is generally believed that
the more measurements are allowed, the stronger the nonlocal sharing
ability. Since then a series of theoretical
\cite{mnbx,assd,dass,sdas,ctdh,akak,sdsd,knab,pjbr,ztfs,ztlh,smak}
and experimental \cite{mzxc,mlmg,tcym} results on Bell's nonlocality
sharing have been derived. In \cite{assd} with equal sharpness
two-outcome measurements the authors show that at most two Bobs can
share the Bell nonlocality of a maximally entangled state with a
single Alice. It has been shown that at most two Bobs can share the
nonlocality with a single Alice by using the local realist
inequalities with three and four dichotomic measurements per
observer \cite{dass}. More recently, by an elegant measurement
strategy, the authors in \cite{pjbr} show that, contrary to the
previous expectations \cite{rsng,mnbx}, there is no limit on the
number of independent Bobs who can have an expected violation of the
CHSH inequality with one Alice. A class of initial two-qubit states,
including all pure two-qubit entangled states, that are capable of
achieving an unlimited number of CHSH inequality violations has been
presented. This fact has recently been illustrated for higher
dimensional bipartite pure states \cite{ztfs}. However, most of the
studies mentioned above are focused on the case of one-sided
(unilateral) sequential measurements (See FIG. 1). Recently, Cheng
et al. \cite{scll,cslb} explored the Bell nonlocality sharing in
bilateral sequential measurements (see FIG. 2), in which a pair of
entangled states is distributed to multiple Alices and Bobs. It is
shown that when the observers A$_1$ and B$_1$ each select their
Positive-Operator-Valued-Measure (POVM) with equal probabilities,
the nonlocality sharing between Alice$_1$-Bob$_1$ and
Alice$_2$-Bob$_2$ is impossible.
\begin{figure}[ptb]
\includegraphics[width=0.4\textwidth]{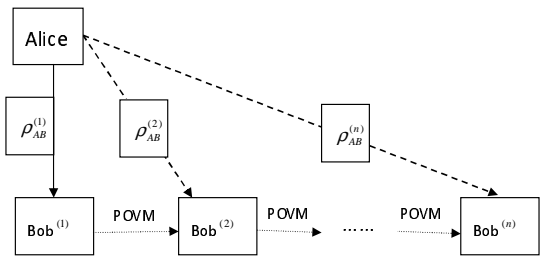}\caption{Shareability of
Bell nonlocality under unilateral measurements. Without losing
generality, in this schematic diagram, we choose Bob to make a
series of measurements, after $n-1$ measurements, we finally observe
whether there is quantum nonlocality of $\rho_{AB}^{n}$ (the quantum
state composed of Bob$^{{n}}$ and Alice ).} \label{Fig 1}
\end{figure}

\begin{figure}[ptb]
\includegraphics[width=0.4\textwidth]{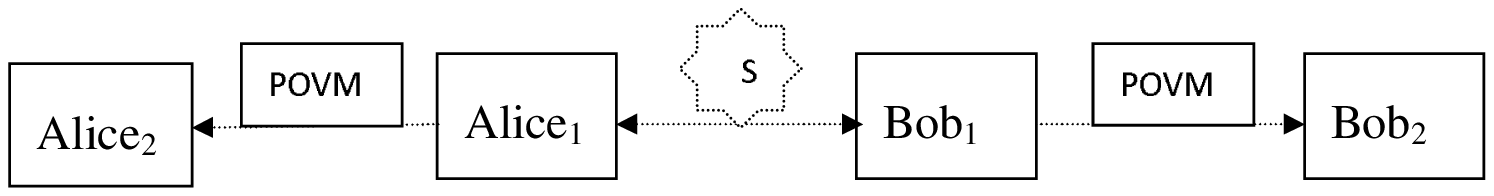}\caption{Shareability of
Bell nonlocality under bilateral measurement. Alice$_1$ and Bob$_1$
share an initial quantum state with Bell nonlocality via a quantum
resource distributor S. They both perform measurements on their own
side and send the particles to Alice$_2$ and Bob$_2$, respectively.
One identifies the nonlocality sharing between Alice$_2$ and Bob$_2$
and so on.} \label{Fig 2}
\end{figure}

With the development of quantum technology, quantum systems of
medium scale have attracted much attention
\cite{faka,hsz1,hsz2,mgon,ywuu}, such as quantum computation with
noisy intermediate-scale quantum processors \cite{nisq}. The
researches on such quantum systems have spawned another kind of
nonlocality that may be stronger than the Bell nonlocality -- the
quantum network nonlocality
\cite{cnsp,aaci,cdns,apda,mxlo,mesn,aamm,pcji,lych,lxjh,anat,mosb}.
In Ref. \cite{pcji} it was proved that any connected network
consisting of entangled pure states can exhibit genuine many-body
quantum Bell nonlocality. In Ref. \cite{apda} the authors presented
an inequality to certify the nonlocality of a star-shaped quantum
network. In Ref. \cite{lxjh} the problem of nonlocal correlation in
tree tensor networks has been studied in detail. It was shown in
Ref. \cite{mosb} that in a large class of networks with no inputs,
suitably chosen quantum color matching strategies can lead to
non-local correlations that cannot be produced in classical ways.

Recently, the nonlocality sharing problem has been mainly studied for
the Bell nonlocality. It would be also of significance to investigate the
nonlocality sharing for quantum networks. In
Ref. \cite{whxl} the authors investigated network nonlocality
sharing in the extended bilocal scenario via bilateral weak
measurements. Interestingly, when the both states $\rho_{AB}$ and
$\rho_{BC}$ are two-qubit maximally entangled pure state,
$|\psi\rangle\langle\psi|$, where
$|\psi\rangle=\frac{1}{\sqrt{2}}(|00\rangle+|11\rangle)$, by
bilateral weak measurements the network nonlocality sharing can be
revealed from the multiple violation of the Branciard-Rosset-Gisin-
Pironio (BRGP) inequalities \cite{cdns} of any
Alice$_2$-Bob-Charlie$_2$, which has no counterpart in the case of
Bell nonlocality sharing scenario.

In this paper, first we study the nonlocality sharing ability of
 bilocality quantum networks under unilateral and bilateral average
measurements. Moreover, based on bilocality methodologies, we
investigate comprehensively the problem of nonlocality sharing
ability of star quantum networks under unilateral and bilateral
average measurements. Our results incorporate the results of Ref.
\cite{whxl} and greatly generalize the range of quantum network
states.

\section{One basic fact---Nonlocal of bilocality scenario}
We first recall the simplest quantum network --- bilocality
scenario(see FIG. 3) which is given by three observers Alice, Bob
and Charlie, two sources $S_1$ and $S_2$, each source sends a
bipaitite quantum state\cite{cnsp,nqam}. Consider that Alice
receives measurement setting (or input) $x$, while Bob gets input
$y$, and Charlie $z$. Upon receiving their inputs, each party
provides a measurement result (an output), denoted by $a$ for Alice,
$b$ for Bob and $c$ for Charlie. In this context, the observed
statistics is said to be 2-local when
\begin{eqnarray}\label{ssn1} &  p(a,b,c|x,y,z) \nonumber \\ & =  \int
d\lambda d\mu q_1(\lambda)q_2(\mu)p(a|x,\lambda) p(b|y,\lambda,\mu)
p(c|z,\mu),
\end{eqnarray} where $\lambda$ and $\mu$ are hidden variables related to sources $S_1$ and $S_2$, independent shared
random variables distributed according to the densities
$q_1(\lambda)$ and $q_2(\mu)$, respectively.
\begin{figure}[ptb]
\includegraphics[width=0.4\textwidth]{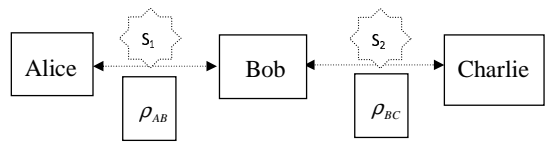}\caption{The bilocality
scenario. Resource $S_1$ distributes entangled state $\rho_{AB}$ to
Alice and Bob; resource $S_2$ distributes entangled state
$\rho_{BC}$ to Bob and Charlie. }
\label{Fig 3}%
\end{figure}

The set of 2-local correlations is non-convex. In order to
efficiently characterize the 2-local set, non-linear Bell
inequalities are required. In Refs. \cite{cnsp,cdns}, first kind
non-linear inequalities that allow one to efficiently capture
2-local correlations were derived. They are better than linear
inequalities. Consider that each party measures two possible
dichotomic observables ($a,b,c,x,y,z=0,1$). It follows that any
bilocal hidden variable (BLHV) model described by Eq.(\ref{ssn1})
must fulfill the bilocality inequality:
\begin{eqnarray}\label{ssn2}
S_{biloc}\equiv \sqrt{|I|} + \sqrt{|J|} \leq 2 ,
\end{eqnarray}
where $ I \equiv \langle(A_0 + A_1)B_0(C_0 + C_1)\rangle $, $ J \equiv
\langle(A_0 - A_1)B_1(C_0 - C_1)\rangle$, $\langle
A_xB_yC_z\rangle=\sum_{a,b,c=0,1}(-1)^{a+b+c} p(a,b,c|x,y,z)$,
$\langle O \rangle=Tr(O\rho)$ denotes the
mean value of the observable $O$ with respect to the measured state $\rho$. The violation of this inequality implies the network nonlocality of the state.

In Ref. \cite{nqam,fglr}, the authors considered the following
two-qubit state shared by Alice and Bob,
$$
\rho_{AB}=\frac{1}{4}(I_4+\vec{r}\vec{\sigma}\otimes I_2+\vec{s}I_2\otimes \vec{\sigma}+\sum_{i,j}t_{ij}^{AB}\sigma_i\otimes\sigma_j),
$$
where $\vec{\sigma} =(\sigma_x,\sigma_y, \sigma_z)$ are the standard Pauli matrices.
Here the vectors $\vec{r}$ and $\vec{s}$ represent the Bloch vectors of Alice's and
Bob's reduced states, respectively. While $t_{i,j}^{AB}$, $i, j \in {x, y,
z}$, are the entries of the correlation matrix $t^{AB}$, $I_m$ stands for the $m\times m$
identity matrix. The state $\rho_{BC}$ shared by Bob and Charlie can be expressed in a similar way. Then the maximal value of $S_{biloc}$ is shown to be
\begin{eqnarray}\label{ssn3}
S_{biloc}^{max}=2\sqrt{\delta_1\eta_1+\delta_2\eta_2},
\end{eqnarray}
where $\delta_1$ and $\delta_2$ ($\delta_1 \geq\delta_2$) are the two largest eigenvalues of the matrix $R^{AB}=\sqrt{t^{AB\dag}t^{AB}}$. Similarly, $\eta_1$ and $\eta_2$ ($\eta_1 \geq \eta_2$) are the two largest eigenvalues of the matrix $R^{BC}=\sqrt{t^{BC\dag}t^{BC}}$. Moreover, according to the Horodecki criterion \cite{rpmh}, the maximal CHSH value for $\rho_{AB}$ is given by
$S_{AB}^{max}=2\sqrt{\delta_1^2+\delta_2^2}=2\|\vec{\delta}\|$,
where $\vec{\delta}=(\delta_1,\delta_2)^{T}$ with $T$ denoting the transpose. Similarly, for
$\rho_{BC}$ one has
$S_{BC}^{max}=2\sqrt{\eta_1^2+\eta_2^2}=2\|\vec{\eta}\|$. Then it follows from Eq.
(\ref{ssn3}) that
\begin{eqnarray}\label{ssn4}
S_{biloc}^{max}=2\sqrt{\vec{\delta}\cdot\vec{\eta}}\leq 2\sqrt{\|\vec{\delta}\|\|\vec{\eta}\|} \leq \sqrt{S_{AB}^{max}S_{BC}^{max}}.
\end{eqnarray}
In the following, we will use (\ref{ssn2}), (\ref{ssn3}) and (\ref{ssn4}) to judge whether a quantum state in bilocality scenario is still nonlocal after unilateral or bilateral measurements.

\section{ Nonlocality sharing under unilateral measurement in bilocality scenario}
We first introduce the nonlocality sharing under unilateral measurement.
To begin with, Alice (Alice$_{1}$) shares an arbitrary
entangled bipartite state $\rho_{AB}$ ($\rho_{AB}^{(1)}$) with Bob.
Alice proceeds by choosing a uniformly random input, performing the
corresponding measurement and recording the outcome. Denote the
binary input and output of Alice$_k$ (Bob) by $x^{(k)}$ ($Y$) and
$A_x$ ($B$), respectively. Suppose Alice$_{1}$ performs the
measurement according to $x^{(1)}=x$ with the outcome $A_{(1)}=a$.
With equal probabilities over the inputs and outputs of Alice$_{1}$,
the post-measurement unnormalized state shared between Alice$_2$ and
Bob is given by
$$
\rho_{AB}^{(2)}=\frac{1}{2}\Sigma_{a,x}(\sqrt{A_{a|x}^{(1)}}\otimes
I_2) \rho_{AB}^{(1)}(\sqrt{A_{a|x}^{(1)}}\otimes I_2),
$$
where $A_{a|x}$ is the positive operator-valued measure (POVM) with
respect to to the outcome $a$ of Alice$^{(1)}$'s measurement for
input $x$, $I_2$ is the $2\times 2$ identity matrix. Repeating this
process, one gets the state $\rho_{AB}^{(k)}$ shared between
Alice$^{(k)}$ and Bob for $k=1,2,\cdots,n$. This process is called
unilateral measurement, see the schematic diagram shown in FIG. 4.
Our main goal is to judge whether the quantum network state composed
of $\rho_{AB}^{(n)}$ and $\rho_{BC}$ has network nonlocality.
\begin{figure}[ptb]
\includegraphics[width=0.4\textwidth]{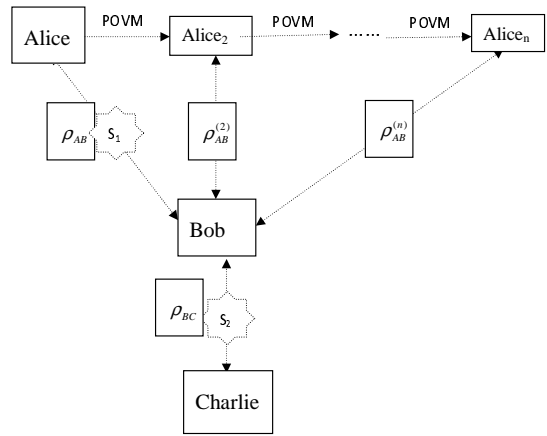}\caption{Nonlocality sharing under
unilateral measurement in bilocality scenario. Here, we choose
Alice's side for a series of measurements. In this network structure
and selective measurements, the initial state is
$\rho_{AB}\otimes\rho_{BC}$ and the final state is
$\rho_{AB}^{(n)}\otimes\rho_{BC}$ .}
\label{Fig 4}%
\end{figure}

We employ the POVMs with measurement operators $\{E,I-E\}$, where
$E=\frac{1}{2}(I_2+ \gamma \sigma_{\vec{r}})$, $\vec{r}\in R^3$ with
$\|\vec{r}\|=1$,
$\sigma_{\vec{r}}=r_1\sigma_x+r_2\sigma_y+r_3\sigma_z$, $\gamma \in
[0,1]$ is the sharpness of the measurement. For each
$k=1,2,\cdots,n$, Alice$_k$'s POVMs are given by \be\label{a00o}
A_{0|0}^{k}=\frac{1}{2}(I_2+(\cos\theta\sigma_z+\gamma_k\sin\theta\sigma_x)),
\ee \be\label{a01o}
A_{0|1}^{k}=\frac{1}{2}(I_2+(\cos\theta\sigma_z-\gamma_k\sin\theta\sigma_x)),
\ee for some $\theta\in (0,\frac{\pi}{4}]$, $k=1,2,\cdots,n$. Bob's
POVMs are given by \be\label{b00o}
B_{0|0}=\frac{1}{2}(I_2+\cos\theta\sigma_z), \ee \be\label{b01o}
B_{0|1}=\frac{1}{2}(I_2+\sin\theta\sigma_x). \ee

After Alice's side makes a finite number of sequential POVMs
measurements, we get the state $\rho_{AB}^{(n)}$ and the
following conclusion.

\begin{theorem}
If $\rho_{AB}$ is an arbitrary entangled two-qubit pure state and
$\rho_{BC}$ is the maximally entangled state
$|\psi\rangle=\frac{1}{\sqrt{2}}(|00\rangle+|11\rangle)$, then for
each $n\in \mathbb{N}$, the quantum network state composed of
$\rho_{AB}^{(n)}$ and $\rho_{BC}$ has network nonlocality, where
$\rho_{AB}^{(n)}$ stands for the state between Alice$_n$ (after
$n-1$ consecutive measurement by Alice) and Bob.
\end{theorem}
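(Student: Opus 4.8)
The plan is to reduce the statement to a single scalar inequality for the two largest singular values of the correlation matrix of the residual state $\rho_{AB}^{(n)}$, and then to verify that inequality by tracking how the averaged measurement map deforms that matrix. First I would use a local unitary on the two qubits to bring the initial pure state to Schmidt form $|\phi\rangle=\cos\alpha|00\rangle+\sin\alpha|11\rangle$ with $\alpha\in(0,\pi/4]$ (entanglement forces $\alpha>0$); the measurement directions in (\ref{a00o})--(\ref{b01o}) transform covariantly, so this is no loss of generality. A direct evaluation gives $t^{AB}=\mathrm{diag}(\sin2\alpha,-\sin2\alpha,1)$, while for the maximally entangled $\rho_{BC}$ one has $t^{BC}=\mathrm{diag}(1,-1,1)$, hence $\eta_1=\eta_2=1$. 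By the formula (\ref{ssn3}) the residual network state is nonlocal exactly when $S_{biloc}^{max}=2\sqrt{\delta_1\eta_1+\delta_2\eta_2}=2\sqrt{\delta_1+\delta_2}>2$, i.e.\ when the two largest singular values $\delta_1\ge\delta_2$ of $t^{AB,(n)}$ obey $\delta_1+\delta_2>1$. Everything therefore reduces to controlling $t^{AB,(n)}$.

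Next I would compute the one-step channel $\Phi^{(k)}(X)=\tfrac12\sum_{a,x}\sqrt{A^{(k)}_{a|x}}\,X\,\sqrt{A^{(k)}_{a|x}}$ on Alice's qubit. Writing each two-outcome POVM as $A_0=\tfrac12(I+\vec m\cdot\vec\sigma)$, the operator square roots take the closed form $\sqrt{A_0}=uI+v\,\hat m\cdot\vec\sigma$, and with the identity $(\hat m\cdot\vec\sigma)\sigma_j(\hat m\cdot\vec\sigma)=2\hat m_j(\hat m\cdot\vec\sigma)-\sigma_j$ one finds that a single POVM pair sends $\sigma_j\mapsto F\sigma_j+(1-F)\hat m_j(\hat m\cdot\vec\sigma)$ with $F=\sqrt{1-\|\vec m\|^2}$. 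The decisive step is the average over the two inputs: since (\ref{a00o})--(\ref{a01o}) differ only in the sign of the $\sigma_x$ part, the off-diagonal $xz$ contributions cancel, so $\Phi^{(k)}$ acts on the correlation matrix by left multiplication by a diagonal matrix $M^{(k)}=\mathrm{diag}(M_{xx}^{(k)},M_{yy}^{(k)},M_{zz}^{(k)})$ with $M_{yy}^{(k)}=F_k=\sin\theta\sqrt{1-\gamma_k^2}$ and $M_{zz}^{(k)}=F_k+(1-F_k)\cos^2\theta/(\gamma_k^2\sin^2\theta+\cos^2\theta)$, and $M_{xx}^{(k)}$ the analogous expression. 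Because every $M^{(k)}$ is diagonal in the same frame they commute, so $t^{AB,(n)}=\big(\prod_{k=1}^{n-1}M^{(k)}\big)t^{AB}$ is diagonal with entries $\delta_z=\prod_k M_{zz}^{(k)}$, $\delta_x=\sin2\alpha\prod_k M_{xx}^{(k)}$, $\delta_y=\sin2\alpha\prod_k M_{yy}^{(k)}$.

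Finally I would exhibit sharpnesses making $\delta_1+\delta_2>1$. Since $M_{xx}^{(k)}\ge M_{yy}^{(k)}$ always, $\delta_x\ge\delta_y$; and for small $\gamma_k$ the entry $\delta_z$ is the largest, so $\delta_1+\delta_2=\delta_z+\delta_x$. One checks $M_{zz}^{(k)}\le1$ with $M_{zz}^{(k)}\to1$ as $\gamma_k\to0$, while $M_{xx}^{(k)}\to\sin\theta>0$; thus the limiting values are $\delta_z\to1$ and $\delta_x\to\sin2\alpha\,\sin^{\,n-1}\theta>0$, giving $\delta_z+\delta_x\to1+\sin2\alpha\sin^{\,n-1}\theta>1$. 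Hence, for a fixed $\theta\in(0,\pi/4]$ and each fixed $n$, one can choose $\gamma_1,\dots,\gamma_{n-1}$ small and positive (equivalently, so that $1-\prod_kM_{zz}^{(k)}<\sin2\alpha\prod_kM_{xx}^{(k)}$) to force $\delta_1+\delta_2>1$, which establishes network nonlocality of $\rho_{AB}^{(n)}\otimes\rho_{BC}$ for every $n$.

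I expect the main obstacle to be the second step: correctly evaluating the square-root-POVM channel on the Pauli basis and, above all, recognizing that averaging over the two inputs cancels the off-diagonal $xz$ terms so that each $M^{(k)}$ is diagonal. Without this diagonality the $M^{(k)}$ would not commute and the clean product formula for $t^{AB,(n)}$, on which the whole singular-value estimate rests, would be unavailable. A secondary point needing care is the legitimacy of reducing to Schmidt form while retaining the prescribed measurements, together with confirming that $\delta_z$ and $\delta_x$ are genuinely the two largest singular values in the small-$\gamma_k$ regime actually used.
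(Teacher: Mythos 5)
Your proposal is logically sound and reaches the same pivot point as the paper --- with $\rho_{BC}$ maximally entangled one has $\eta_1=\eta_2=1$, which reduces everything to showing $\delta_1^{(n)}+\delta_2^{(n)}>1$ for the two largest singular values of the correlation matrix of $\rho_{AB}^{(n)}$ --- but you settle that inequality by a genuinely different route. The paper imports the Brown--Colbeck result that $\rho_{AB}^{(n)}$ still violates the CHSH inequality, reads this via the Horodecki criterion as $(\delta_1^{(n)})^2+(\delta_2^{(n)})^2>1$, and then uses the one-line observation that $\delta\ge\delta^2$ on $[0,1]$ to conclude $\delta_1^{(n)}+\delta_2^{(n)}\ge(\delta_1^{(n)})^2+(\delta_2^{(n)})^2>1$. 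You instead compute the averaged square-root-POVM channel explicitly, show that it acts as a commuting family of diagonal contractions $M^{(k)}$ on the correlation matrix, and exhibit sharpnesses (a small-$\gamma_k$ limit) for which $\delta_z+\delta_x>1$. Your computation checks out, including the cancellation of the $xz$ cross terms upon averaging the two inputs and the ordering $M_{zz}\ge M_{xx}\ge M_{yy}$ for $\theta\le\pi/4$, and it has the virtue of being self-contained and of producing explicit parameter ranges rather than citing the CHSH-sharing literature. What it loses is precisely the paper's main conceptual point: the inequality $\delta_1+\delta_2\ge\delta_1^2+\delta_2^2$ shows that \emph{any} protocol achieving CHSH sharing automatically achieves network-nonlocality sharing, i.e.\ the bilocality test with a maximally entangled second source is strictly easier to pass than the CHSH test, which is the message the paper is driving at. Two caveats on your version: (i) your $\gamma_k\to0$ tuning is chosen separately for each fixed $n$ and renders the intermediate measurements nearly non-disturbing along $z$, so, unlike the Brown--Colbeck construction the paper leans on, it does not guarantee that the intermediate Alices themselves witness any Bell or network nonlocality; for the literal statement about Alice$_n$ this suffices, but it weakens the ``sharing'' content; (ii) the reduction to Schmidt form while keeping the prescribed measurement directions fixed relies on the standard convention that the measurement frame is adapted to the state's Schmidt basis, which you note but should state as an explicit assumption.
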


\begin{proof} In order to prove that the quantum network state composed of $\rho_{AB}^{(n)}$ and $\rho_{BC}$ has network nonlocality,
it is only necessary to prove that they violate the bilocality
inequality Eq. (\ref{ssn2}). With respect to the Alice$_k$'s POVMs,
let us define the expectation operators $A_x^{k}=A_{0|x}^{k}-A_{1|x}^{k}$ and
$B_y=B_{0|y}-B_{1|y}$ for $x,y=0,1$. Following the idea of the proof
in Ref. \cite{pjbr}, it is easy to infer that there is a strong
recursive relationship between $\rho_{AB}^{(n)}$ and $\rho_{AB}$.
The CHSH value associated with state shared between Alice$^{(n)}$ and Bob
can be similarly written as
$$I_{CHSH}^{n}=2^{2-n}(\gamma_n\sqrt{\delta_2}\sin\theta
+\sqrt{\delta_1}\cos\theta\prod_{j=1}^{n-1}(1+\sqrt{1-\gamma_j^2})).
$$
For any pure state, its corresponding correlation matrix
in Bloch representation has the maximum eigenvalue $\delta_1=1$.
According to the conclusion of Ref. \cite{pjbr}, the Bell nonlocality can be shared
under unilateral measurements. Then $\rho_{AB}^{(n)}$
would violate the CHSH inequality. Hence, according to the Horodecki
criterion \cite{rpmh},
$S_{AB}^{(n)max}=2\sqrt{(\delta_1^{(n)})^2+(\delta_2^{(n)})^2} > 2$.

On the other hand, the corresponding $\eta_1$ and $\eta_2$ for the maximally entangled pure state $\rho_{BC}$ are both $1$. Therefore, Eq. (\ref{ssn3}) can be written as
$S_{biloc}^{(n)max}=2\sqrt{\delta_1^{(n)}+\delta_2^{(n)}}$. Since
$|\delta_1^{(n)}|\leq 1$ and $|\delta_2^{(n)}|\leq 1$, we
have $\sqrt{\delta_1^{(n)}+\delta_2^{(n)}} \geq
\sqrt{(\delta_1^{(n)})^2+(\delta_2^{(n)})^2}$. That is,
$S_{biloc}^{(n)max} > 2$, which completes the proof.
\end{proof}

We note that for an entangled two-qubit mixed state $\rho_{AB}$ with
$\delta_1=1$ and $\delta_2 > 0$, according to the conclusion of Ref.
\cite{pjbr}, the Bell nonlocality of
$\rho_{AB}$ also can be shared under unilateral measurements. Therefore, similar to the proof of Theorem 1, we get
\begin{theorem}
If $\rho_{AB}$ is an entangled two-qubit mixed state with
$\delta_1=1$ and $\delta_2 > 0$, and $\rho_{BC}$ is the maximally
entangled state
$|\psi\rangle=\frac{1}{\sqrt{2}}(|00\rangle+|11\rangle)$, then for
each $n\in \mathbb{N}$, the quantum network state composed of
$\rho_{AB}^{(n)}$ and $\rho_{BC}$ has network nonlocality, where
$\rho_{AB}^{(n)}$ stands for the state between Alice$_n$ (after
$n-1$ consecutive measurement by Alice) and Bob.
\end{theorem}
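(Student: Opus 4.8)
The plan is to follow the proof of Theorem 1 almost verbatim, since the only change in hypothesis is that the condition $\delta_1=1$ is now postulated directly for a mixed $\rho_{AB}$ rather than being a free consequence of purity. First I would note that the recursive construction producing $\rho_{AB}^{(k)}$ from $\rho_{AB}^{(k-1)}$ via the POVMs (\ref{a00o})--(\ref{a01o}) acts only on the Bloch and correlation data of the state and makes no use of purity. Consequently the same recursion yields the identical CHSH value
$$I_{CHSH}^{n}=2^{2-n}\left(\gamma_n\sqrt{\delta_2}\sin\theta+\sqrt{\delta_1}\cos\theta\prod_{j=1}^{n-1}(1+\sqrt{1-\gamma_j^2})\right),$$
where now $\delta_1,\delta_2$ are the two largest singular values of the correlation matrix of the given mixed state $\rho_{AB}$.

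Next I would invoke the hypothesis $\delta_1=1$ together with the sharing strategy of Ref.~\cite{pjbr}. This is precisely the condition under which the factor $\cos\theta\prod_{j=1}^{n-1}(1+\sqrt{1-\gamma_j^2})$ can be kept above the CHSH threshold for every $n$ by a suitable choice of the sharpness parameters $\gamma_j$; hence $\rho_{AB}^{(n)}$ violates the CHSH inequality, and by the Horodecki criterion \cite{rpmh} we obtain $S_{AB}^{(n)max}=2\sqrt{(\delta_1^{(n)})^2+(\delta_2^{(n)})^2}>2$. The assumption $\delta_2>0$ is used to ensure that the post-measurement states stay entangled along both Bloch directions, so that $\delta_2^{(n)}>0$ is preserved throughout the recursion.

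Finally I would use that $\rho_{BC}$ is maximally entangled, so $\eta_1=\eta_2=1$ and Eq.~(\ref{ssn3}) reduces to $S_{biloc}^{(n)max}=2\sqrt{\delta_1^{(n)}+\delta_2^{(n)}}$. Because $0\le\delta_i^{(n)}\le 1$ gives $\delta_i^{(n)}\ge(\delta_i^{(n)})^2$, it follows that
$$S_{biloc}^{(n)max}=2\sqrt{\delta_1^{(n)}+\delta_2^{(n)}}\ge 2\sqrt{(\delta_1^{(n)})^2+(\delta_2^{(n)})^2}=S_{AB}^{(n)max}>2,$$
so the bilocality inequality (\ref{ssn2}) is violated for every $n$, which is the claim.

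I expect the main obstacle to be confirming that the derivation of $I_{CHSH}^{n}$, and the sharpness-tuning argument borrowed from Ref.~\cite{pjbr}, carry over unchanged to the mixed case: one must verify that the recursion coefficients depend on $\rho_{AB}$ only through $\delta_1$ and $\delta_2$, so that $\delta_1=1$ by itself drives the unbounded sharing, and that no cancellation specific to pure states entered the earlier computation. Once this is checked, the chain of inequalities is identical to that in Theorem 1.
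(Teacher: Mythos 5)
Your proposal is correct and follows essentially the same route as the paper, which proves Theorem 2 simply by noting that Ref.~\cite{pjbr} guarantees unbounded CHSH sharing for mixed states with $\delta_1=1$ and $\delta_2>0$ and then repeating the Theorem~1 chain $S_{biloc}^{(n)max}=2\sqrt{\delta_1^{(n)}+\delta_2^{(n)}}\geq 2\sqrt{(\delta_1^{(n)})^2+(\delta_2^{(n)})^2}>2$ using $\eta_1=\eta_2=1$ for the maximally entangled $\rho_{BC}$. Your added check that the recursion depends only on the correlation-matrix data (not on purity) is exactly the implicit point the paper relies on; the only slight imprecision is that $\delta_2>0$ enters through the term $\gamma_n\sqrt{\delta_2}\sin\theta$ needed to push $I_{CHSH}^{n}$ above $2$, rather than through preservation of entanglement, but this does not affect the argument.
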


\section{Nonlocal sharing under bilateral measurement of bilocality Scenario}

Furthermore, we also allow Charlie to be able to measure his party.
To begin with, Bob and Charlie (Charlie$_{1}$) share an arbitrary
entangled bipartite state $\rho_{BC}$ ($\rho_{BC}^{(1)}$). Charlie
proceeds by choosing a uniformly random input, performing the
corresponding measurement and recording the outcome. Denote the
binary input and output of Charlie$_l$ (Bob) by $z^{(l)}$ ($Y$) and
$C_z$ ($B$), respectively. Suppose Charlie$_{1}$ performs the
measurement according to $z^{(1)}=z$ with the outcome $C_{(1)}=c$.
Averaged over the inputs and outputs of Charlie$_{1}$, the
post-measurement unnormalized state shared between Bob and
Charlie$_2$ is given by
$$
\rho_{BC}^{(2)}=\frac{1}{2}\Sigma_{c,z}(\sqrt{I_2\otimes
C_{c|z}^{(1)}}) \rho_{BC}^{(1)}(I_2\otimes \sqrt{C_{c|z}^{(1)}}),
$$
where take $C_{c|z}$ is of the form of $A_{a|x}$. Repeating this process, one
gets the state $\rho_{BC}^{(l)}$ shared between Bob and
Charlie$^{(l)}$.

Let \be\label{a00o}
C_{0|0}=\frac{1}{2}(I_2+(\cos\theta\sigma_z+\gamma_l\sin\theta\sigma_x)),
\ee \be\label{a01o}
C_{0|1}=\frac{1}{2}(I_2+(\cos\theta\sigma_z-\gamma_l\sin\theta\sigma_x))
\ee for some $\theta\in (0,\frac{\pi}{4}]$, $l=1,2,\cdots,m$.
Defining the expectation operators $C_z=C_{0|z}-C_{1|z}$ and
$B_y=B_{0|y}-B_{1|y}$ for $z,y=0,1$, we have the CHSH value of the
state shared between Bob and Charlie$^{(l)}$,
$$
I_{CHSH}^{l}=2^{2-l}(\gamma_l\sqrt{\eta_2}\sin\theta
+\sqrt{\eta_1}\cos\theta\prod_{j=1}^{l-1}(1+\sqrt{1-\gamma_j^2})).
$$

Concerning the question whether the quantum network state composed
$\rho_{AB}^{(n)}$ and $\rho_{BC}^{(m)}$ (see FIG. 5) has quantum
network nonlocality, we have the following conclusion.
\begin{figure}[ptb]
\includegraphics[width=0.4\textwidth]{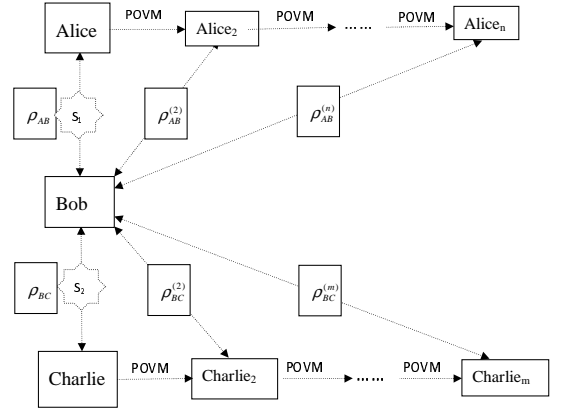}\caption{Nonlocal sharing under bilateral measurement of bilocality
Scenario. Here, we choose Alice and Charlie for a series of
measurements. In this case of network topology and selective
measurements, the initial state is $\rho_{AB}\otimes\rho_{BC}$ and
the final state is $\rho_{AB}^{(n)}\otimes\rho_{BC}^{(m)}$.}
\label{Fig 5}%
\end{figure}

\begin{theorem}
If $\rho_{AB}=\rho_{BC}$ is an entangled two-qubit pure state, then
for each $n\in \mathbb{N}$, the quantum network state composed of
$\rho_{AB}^{(n)}$ and $\rho_{BC}^{(n)}$ has network nonlocality,
where $\rho_{AB}^{(n)}$ stands for the state between Alice$_n$
(after $n-1$ consecutive measurement by Alice) and Bob,
$\rho_{BC}^{(n)}$ stands for the state between Charlie$_n$ (after
$n-1$ consecutive measurement by Charlie) and Bob.
\end{theorem}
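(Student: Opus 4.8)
The plan is to prove that the composite state $\rho_{AB}^{(n)}\otimes\rho_{BC}^{(n)}$ violates the bilocality inequality Eq.~(\ref{ssn2}), i.e.\ that $S_{biloc}^{(n)max}>2$. By Eq.~(\ref{ssn3}) this value is governed entirely by the two largest singular values $\delta_1^{(n)}\ge\delta_2^{(n)}$ of the correlation matrix of $\rho_{AB}^{(n)}$ and $\eta_1^{(n)}\ge\eta_2^{(n)}$ of $\rho_{BC}^{(n)}$, so the whole argument reduces to controlling these four numbers. The decisive simplification comes from the hypothesis $\rho_{AB}=\rho_{BC}$, which lets me compare the two measurement recursions directly.

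First I would exploit this symmetry. Alice acts by her POVMs on the first factor of $\rho_{AB}$ while Charlie acts by the identically structured POVMs on the second factor of $\rho_{BC}$, using the same angle $\theta$ and the same sharpnesses $\gamma_j$. Each averaged update $\rho\mapsto\frac{1}{2}\sum_{a,x}\sqrt{A_{a|x}}\,\rho\,\sqrt{A_{a|x}}$ is self-adjoint and unital, so the $3\times3$ matrix $M_j$ it induces on the correlation block is symmetric; consequently Alice's recursion yields $t^{AB(n)}=M_{n-1}\cdots M_1\,t_0$ by left multiplication, while Charlie's yields $t^{BC(n)}=t_0\,M_1\cdots M_{n-1}$ by right multiplication, where $t_0$ is the common initial correlation matrix. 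Since $t_0$ is symmetric in the Schmidt basis and each $M_j$ is symmetric, these two matrices are transposes of one another, $t^{BC(n)}=(t^{AB(n)})^{T}$, and hence share the same singular values: $\delta_i^{(n)}=\eta_i^{(n)}$ for $i=1,2$ and every $n$. Making the identity $t^{BC(n)}=(t^{AB(n)})^{T}$ fully rigorous --- checking the symmetry of $M_j$ and that the correlation recursion is exactly left/right multiplication --- is the most delicate bookkeeping of the proof.

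With equal spectra in hand, Eq.~(\ref{ssn3}) collapses to $S_{biloc}^{(n)max}=2\sqrt{\delta_1^{(n)}\eta_1^{(n)}+\delta_2^{(n)}\eta_2^{(n)}}=2\sqrt{(\delta_1^{(n)})^2+(\delta_2^{(n)})^2}$, which is exactly the Horodecki CHSH value $S_{AB}^{(n)max}$ of the single post-measurement state $\rho_{AB}^{(n)}$. Thus, unlike Theorem~1, no auxiliary bound of the form $\sqrt{\delta_1+\delta_2}\ge\sqrt{\delta_1^2+\delta_2^2}$ is needed: the bilocal value equals the CHSH value of $\rho_{AB}^{(n)}$, and it remains only to show $S_{AB}^{(n)max}>2$. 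This is the main obstacle, and it is where Ref.~\cite{pjbr} enters. Because $\rho_{AB}$ is pure and entangled we have $\delta_1=1$ and $\delta_2>0$, and the value $I_{CHSH}^{n}=2^{2-n}(\gamma_n\sqrt{\delta_2}\sin\theta+\sqrt{\delta_1}\cos\theta\prod_{j=1}^{n-1}(1+\sqrt{1-\gamma_j^2}))$ is attained by explicit observables on $\rho_{AB}^{(n)}$, so $S_{AB}^{(n)max}\ge I_{CHSH}^{n}$. Exactly as in Theorem~1, the unlimited-sharing strategy of Ref.~\cite{pjbr} allows the sharpnesses $\gamma_j$ and the angle $\theta$ to be tuned so that $I_{CHSH}^{n}>2$ for every $n$; hence $S_{AB}^{(n)max}>2$, and therefore $S_{biloc}^{(n)max}>2$, proving the network nonlocality of $\rho_{AB}^{(n)}\otimes\rho_{BC}^{(n)}$. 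The genuinely subtle point behind $S_{AB}^{(n)max}>2$ is that one fixed parameter choice must keep the violation alive uniformly in $n$, which is precisely the phenomenon established in Ref.~\cite{pjbr}.
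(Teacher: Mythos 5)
Your proof is correct and follows essentially the same route as the paper: both arguments reduce $S_{biloc}^{(n,n)max}$ to $2\sqrt{(\delta_1^{(n)})^2+(\delta_2^{(n)})^2}$ by noting that $\rho_{AB}=\rho_{BC}$ together with identical measurement protocols forces the singular values of the two post-measurement correlation matrices to coincide, and then both invoke the unlimited CHSH-sharing strategy of Ref.~\cite{pjbr} to get $S_{AB}^{(n)max}>2$. The only difference is that you make the equal-spectra step explicit via the transpose relation between the updated correlation matrices, a point the paper leaves implicit.
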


\begin{proof} To prove the theorem one needs to prove that the bilocality inequality Eq. (\ref{ssn2}) is violated. Since the Bell nonlocality can be shared
for the quantum state $\rho_{AB}$ under unilateral measurements, $\rho_{AB}^{(n)}$
would violate the CHSH inequality. According to the Horodecki
criterion \cite{rpmh},
$S_{AB}^{(n)max}=2\sqrt{(\delta_1^{(n)})^2+(\delta_2^{(n)})^2} > 2$.

On the other hand, similar to $\rho_{AB}$, for the state $\rho_{BC}$
we also have
$S_{BC}^{(m)max}=2\sqrt{(\delta_1^{(m)})^2+(\delta_2^{(m)})^2} > 2$
for any $m\in\mathbb{N}$. Therefore, Eq. (\ref{ssn3}) now can be
written as
$S_{biloc}^{(n,m)max}=2\sqrt{\delta_1^{(n)}\delta_1^{(m)}+\delta_2^{(n)}\delta_1^{(m)}}$.

When $n=m$,
$S_{biloc}^{(n,n)max}=2\sqrt{(\delta_1^{(n)})^2+(\delta_2^{(n)})^2}
> 2$. That is, $S_{biloc}^{(n,n)max} > 2$, which completes the proof.
\end{proof}

Remark 1: When $n=m$, we can easily get that the quantum network
nonlocality is sharable. This phenomenon does not exist in Bell's
nonlocality sharing \cite{scll}. If $n \neq m$, the inequality
$S_{biloc}^{(n,m)max} > 2$ implies that the quantity
$S_{biloc}^{max}$ in (\ref{ssn3}) is greater than $2$. It should be
noted that generally the equality cannot be attained in the
inequality (\ref{ssn4}). Therefore, it is difficult to judge whether
Alice$_n$-Bob-Charlie$_m$ can share network non-locality from the
violation of the inequality (\ref{ssn2}).

Remark 2: In Ref. \cite{whxl} the sources
$S_1$ and $S_2$ send pairs of particles in the maximally entangled
state, $\rho_{AB}=\rho_{BC} = |\psi\rangle\langle\psi|$. Then the
network nonlocality sharing between Alice$_1$-Bob-Charlie$_1$ and
Alice$_2$-Bob-Charlie$_2$ can be observed. Here, by incorporating
this conclusion into Theorem 3, we see that the state does not need to be maximally
entangled, and the network nonlocality sharing between
Alice$_n$-Bob-Charlie$_n$ can be obtained for any $n\in\mathbb{N}$.

Similarly, we can easily come to the following conclusion.
\begin{theorem}
If $\rho_{AB}=\rho_{BC}$ is any entangled two-qubit mixed state with
$\delta_1=1$ and $\delta_2 > 0$, then for each $n \in \mathbb{N}$ the
quantum network state composed of $\rho_{AB}^{(n)}$ and
$\rho_{BC}^{(n)}$ has network nonlocality.
\end{theorem}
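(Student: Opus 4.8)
The plan is to prove network nonlocality exactly as in Theorem 3, by exhibiting a violation of the bilocality inequality (\ref{ssn2}) through the closed form (\ref{ssn3}) for $S_{biloc}^{max}$, the only change being that the single-sided sharing input is taken from the mixed-state situation already invoked before Theorem 2 rather than from the pure-state one. First I would control the Alice--Bob side: since $\rho_{AB}$ is an entangled two-qubit mixed state with $\delta_1=1$ and $\delta_2>0$, the result of \cite{pjbr} quoted before Theorem 2 guarantees that its Bell nonlocality is retained under the unilateral sequential POVMs (\ref{a00o})--(\ref{a01o}) for every number of rounds. Hence $\rho_{AB}^{(n)}$ violates CHSH, and the Horodecki criterion \cite{rpmh} gives $S_{AB}^{(n)max}=2\sqrt{(\delta_1^{(n)})^2+(\delta_2^{(n)})^2}>2$, i.e.\ the two largest singular values of the correlation matrix of $\rho_{AB}^{(n)}$ satisfy $(\delta_1^{(n)})^2+(\delta_2^{(n)})^2>1$.

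Next I would transfer this estimate to the Bob--Charlie side. Because $\rho_{AB}=\rho_{BC}$ and Charlie's protocol is chosen identical in form to Alice's, with the same angle $\theta$ and the same sharpness sequence, the recursion governing the post-measurement correlation matrix is the same on both sides. Consequently the singular values $\eta_1^{(n)},\eta_2^{(n)}$ of $\rho_{BC}^{(n)}$ coincide with $\delta_1^{(n)},\delta_2^{(n)}$ and likewise obey $(\eta_1^{(n)})^2+(\eta_2^{(n)})^2>1$. Setting $m=n$ in (\ref{ssn3}) and substituting $\eta_i^{(n)}=\delta_i^{(n)}$ then collapses the bilocal bound to the single-side CHSH bound,
\[
S_{biloc}^{(n,n)max}=2\sqrt{\delta_1^{(n)}\eta_1^{(n)}+\delta_2^{(n)}\eta_2^{(n)}}=2\sqrt{(\delta_1^{(n)})^2+(\delta_2^{(n)})^2}>2 ,
\]
which violates (\ref{ssn2}) and establishes the network nonlocality of $\rho_{AB}^{(n)}\otimes\rho_{BC}^{(n)}$.

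The genuinely new content here is light: once the matched indices $m=n$ are used, the elementary inequality $\sqrt{\delta_1+\delta_2}\geq\sqrt{\delta_1^2+\delta_2^2}$ needed in Theorems 1 and 2 is no longer required, and the bilocal value equals the CHSH value outright. I expect the only delicate point to be the justification that $\eta_i^{(n)}=\delta_i^{(n)}$: Alice measures the first qubit of $\rho_{AB}$ while Charlie measures the second qubit of $\rho_{BC}$, so strictly one should check that acting with the analogous POVM sequence on the two different subsystems transforms the correlation matrix into transposes of one another, whence equal singular values. The remaining hidden obstacle---that the mixed-state unilateral sharing actually persists for all $n$ under a suitable limiting choice of the $\gamma_j$---is precisely the input imported from Theorem 2 and \cite{pjbr}, and I would cite it rather than re-derive the recursion for $I_{CHSH}^{n}$.
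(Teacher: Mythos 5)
Your proposal is correct and follows essentially the same route as the paper: the paper proves Theorem 4 by simply invoking the argument of Theorem 3 with the mixed-state unilateral-sharing input (quoted before Theorem 2 from \cite{pjbr}) in place of the pure-state one, which is exactly what you do. Your additional remark about checking that Alice's and Charlie's POVM sequences act on different subsystems so that the resulting correlation matrices are transposes of one another (hence have equal singular values) is a point the paper passes over silently, but it does not change the structure of the argument.
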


\section{Nonlocality sharing in the star network scenario}
The $n$-partite star network \cite{apda} is the natural extension of
the bilocality scenario, which is composed of n sources sharing a
quantum state between one of the $n$ nodes $A_1, A_2, \cdots, A_n$
and a central node Bob (see FIG. 6). The bilocality scenario
corresponds to the particular case of $n=2$.
\begin{figure}[ptb]
\includegraphics[width=0.4\textwidth]{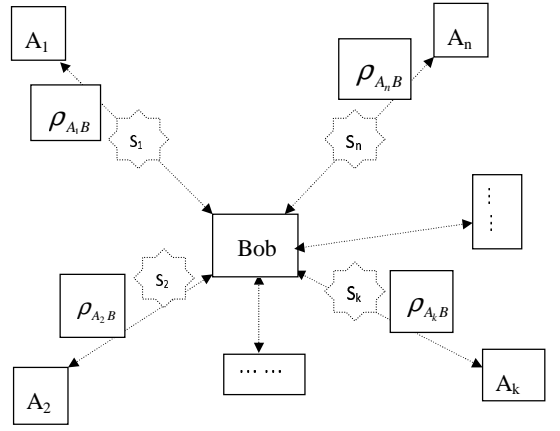}\caption{The star network scenario. Here, we choose Bob as the intermediate node,
and the source $S_i$ distributes the quantum state $\rho_{A_iB}$ to
$A_i$ and Bob, $i=1,2,\cdots,n$.}
\label{Fig 6}%
\end{figure}

The classical description of the correlations in this scenario is characterized by the
probability decomposition,
\begin{eqnarray}\label{ssn5}
& p(\{a_i\}_{i=1}^n,b|\{x_i\}_{i=1}^n,y) \nonumber \\ &=\int(\prod_{i=1}^nd\lambda_ip(\lambda_i)p(a_i|x_i,\lambda_i))p(b|y,\{\lambda_i\}_{i=1}^n).
\end{eqnarray}
As shown in [19] the following $n$-locality inequality holds,
\be\label{nns5}
\mathcal {N}_{star}=|I|^{1/n}+|J|^{1/n} \leq 1,
\ee
where
$$
I=\frac{1}{2^n}\sum_{x_1,\cdots,x_n}\langle
A_{x_1}^1A_{x_2}^2\cdots A_{x_n}^nB_0\rangle,
$$
$$
J=\frac{1}{2^n}\sum_{x_1,\cdots,x_n}(-1)^{\sum_ix_i}\langle
A_{x_1}^1A_{x_2}^2\cdots A_{x_n}^nB_1\rangle,
$$
\begin{eqnarray}
&\langle A_{x_1}^1A_{x_2}^2\cdots A_{x_n}^nB_y\rangle \nonumber\\
&=\sum_{a_1,\cdots,a_n,b}(-1)^{b+\sum_ia_i}p(\{a_i\}_{i=1}^n,b|\{x_i\}_{i=1}^n,y).\nonumber
\end{eqnarray}

According to Ref. \cite{fglr}, with respect to the generic quantum
state $\rho_{A_1B}\otimes \rho_{A_2B}\otimes\cdots\otimes
\rho_{A_nB}$, the maximal value of $\mathcal {N}_{star}$ is given by
\be\label{nns6} \mathcal {N}_{star}=\sqrt{(\prod_{i=1}^n
t_1^{A_i})^{1/n}+(\prod_{i=1}^n t_2^{A_i})^{1/n}}, \ee where
$t_1^{A_i}$ and $t_2^{A_i}$ are the two largest (positive)
eigenvalues of the matrix $t^{A_iB\dag}t^{A_iB}$ with $t_1^{A_i}\geq
t_2^{A_i}$.

\section{ Nonlocality sharing under unilateral measurements
in star network quantum states} The schematic diagram of this
situation is shown in FIG. 7, where Alice makes sequential POVM
measurements as described in Section A in the bilocality scenario,
and one gets the sequential quantum states $\rho_{AB}^{(k)}$,
$k=1,2,\cdots,m$. We need to verify the network nonlocality of the
quantum network state $\rho\equiv \rho_{AB}^{(m)}\otimes
\rho_{A_2B}\otimes \cdots \otimes\rho_{A_nB}$.
\begin{figure}[ptb]
\includegraphics[width=0.4\textwidth]{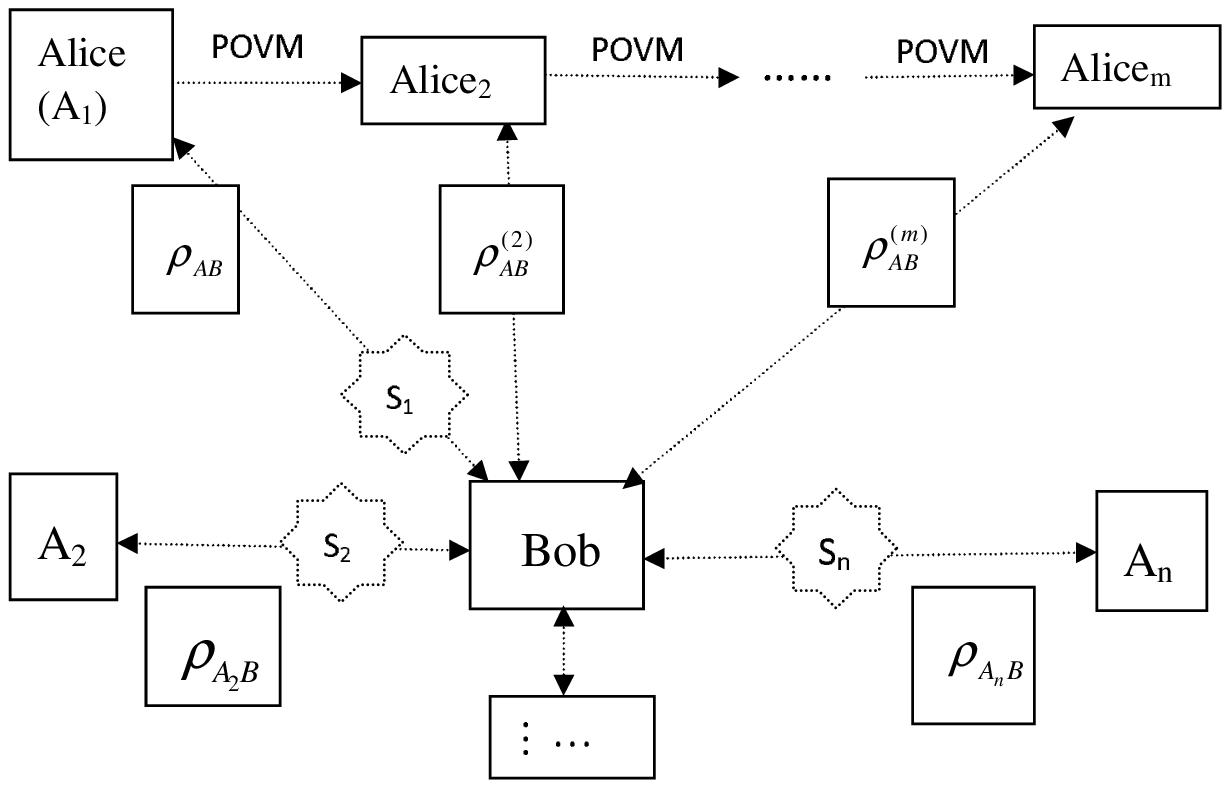}\caption{Nonlocal sharing under
unilateral measurements of star network quantum states. Here, we
choose Alice's side for a series of measurements. In this network
structure and selective measurements, the initial state is
$\rho_{A_1B}\otimes \rho_{A_2B}\otimes \cdots \otimes\rho_{A_nB}$
and the final state is $\rho_{A_1B}^{(m)}\otimes \rho_{A_2B}\otimes
\cdots \otimes\rho_{A_nB}$.}
\label{Fig 7}%
\end{figure}

\begin{theorem}
If $\rho_{A_1B}$ is an entangled pure state or an entangled mixed
state with $\delta_1=1$ and $\delta_2> 0$, and  $\rho_{A_2B}= \cdots =\rho_{A_nB}=|\psi\rangle\langle\psi|$,
then for each $n, m \in \mathbb{N}$ the quantum network state $\rho\equiv \rho_{AB}^{(m)}\otimes \rho_{A_2B}\otimes \cdots
\otimes\rho_{A_nB}$ has network nonlocality.
\end{theorem}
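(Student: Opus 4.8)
The plan is to reduce the star-network $n$-locality inequality $(\ref{nns5})$ to the already-established bilocality result (Theorems 1 and 2) by exploiting the product structure of the formula $(\ref{nns6})$ for $\mathcal{N}_{star}^{max}$. First I would recall that after Alice$_1$ performs $m-1$ sequential POVMs as specified in $(\ref{a00o})$–$(\ref{b01o})$, the post-measurement state $\rho_{AB}^{(m)}$ has correlation-matrix singular values $\delta_1^{(m)}$ and $\delta_2^{(m)}$; by the argument in the proof of Theorem 1 (invoking the measurement strategy of Ref.~\cite{pjbr}), for a pure initial state we have $\delta_1^{(m)}=1$ for all $m$, while $\delta_2^{(m)}>0$ remains strictly positive. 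The same conclusion holds for the mixed case with $\delta_1=1,\ \delta_2>0$, by Theorem 2. For the remaining unmeasured arms, $\rho_{A_iB}=|\psi\rangle\langle\psi|$ is maximally entangled, so its two largest eigenvalues of $t^{A_iB\dagger}t^{A_iB}$ are both equal to $1$; that is, $t_1^{A_i}=t_2^{A_i}=1$ for $i=2,\dots,n$.

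Next I would substitute these values directly into $(\ref{nns6})$. Writing $t_1^{A_1}=(\delta_1^{(m)})^2=1$ and $t_2^{A_1}=(\delta_2^{(m)})^2$ for the measured arm, the two products collapse:
\begin{equation}
\prod_{i=1}^n t_1^{A_i}=1,\qquad \prod_{i=1}^n t_2^{A_i}=(\delta_2^{(m)})^2,
\end{equation}
so that $(\ref{nns6})$ becomes
\begin{equation}
\mathcal{N}_{star}^{max}=\sqrt{1+\bigl((\delta_2^{(m)})^2\bigr)^{1/n}}
=\sqrt{1+(\delta_2^{(m)})^{2/n}}.
\end{equation}
Since $\delta_2^{(m)}>0$ for every $m$, the second term under the radical is strictly positive, hence $\mathcal{N}_{star}^{max}>1$, which violates $(\ref{nns5})$ and certifies network nonlocality of $\rho$ for all $n,m\in\mathbb{N}$. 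This completes the argument.

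The only real subtlety—and the step I expect to require the most care—is establishing that $\delta_2^{(m)}$ stays strictly positive after arbitrarily many sequential measurements, i.e.\ that the repeated weak-measurement channel on Alice's arm does not drive the subdominant singular value of the correlation matrix to zero. This is precisely the content borrowed from Theorem 1 and Ref.~\cite{pjbr}: with the sharpness parameters $\gamma_j$ chosen appropriately (sufficiently weak early measurements), the recursive relation for $I_{CHSH}^{m}$ keeps $\delta_2^{(m)}>0$ while $\delta_1^{(m)}=1$ is preserved. Everything else is the elementary observation that inserting maximally entangled ancillary arms multiplies both products in $(\ref{nns6})$ by $1$, so the bilocality violation ($n=2$) lifts verbatim to the $n$-arm star. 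I would note as well that, unlike inequality $(\ref{ssn4})$, here no lossy bound is invoked: the formula $(\ref{nns6})$ is used as an exact expression for the maximal value, so the strict inequality $\mathcal{N}_{star}^{max}>1$ is genuinely attained.
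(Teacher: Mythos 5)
Your overall strategy coincides with the paper's: reduce the star-network inequality to the bilocality/CHSH result for the single measured arm, and observe that each maximally entangled arm contributes a factor of $1$ to both products in Eq.~(\ref{nns6}). However, there is a genuine gap in the execution: the claim that $\delta_1^{(m)}=1$ for all $m$ is false for the sequential measurement scheme used in Theorems 1 and 2. Each unsharp POVM on Alice's side acts on her Bloch components as a map that preserves only the direction of the (subnormalized) measurement vector $\vec n_x=(\pm\gamma_k\sin\theta,0,\cos\theta)$ and shrinks the orthogonal directions by $\sqrt{1-\|\vec n_x\|^2}$; averaging over $x=0,1$ then strictly contracts the $z$-component whenever $\gamma_k>0$, so the dominant singular value of the correlation matrix strictly decreases at every step. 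The paper's own expression $I_{CHSH}^{m}=2^{2-m}\bigl(\gamma_m\sqrt{\delta_2}\sin\theta+\sqrt{\delta_1}\cos\theta\prod_{j=1}^{m-1}(1+\sqrt{1-\gamma_j^2})\bigr)$ already displays this damping through the factor $\prod_{j}(1+\sqrt{1-\gamma_j^2})/2<1$. Once $\delta_1^{(m)}<1$, your final step collapses: the correct expression is $\mathcal{N}_{star}=\sqrt{(\delta_1^{(m)})^{2/n}+(\delta_2^{(m)})^{2/n}}$, not $\sqrt{1+(\delta_2^{(m)})^{2/n}}$, and positivity of $\delta_2^{(m)}$ alone no longer forces $\mathcal{N}_{star}>1$.

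The repair is exactly what the paper uses and what you already have in hand from Theorems 1 and 2: the quantitative CHSH violation $S_{AB}^{(m)max}=2\sqrt{(\delta_1^{(m)})^2+(\delta_2^{(m)})^2}>2$, i.e.\ $(\delta_1^{(m)})^2+(\delta_2^{(m)})^2>1$. Since $0\le(\delta_i^{(m)})^2\le1$ implies $\bigl((\delta_i^{(m)})^2\bigr)^{1/n}\ge(\delta_i^{(m)})^2$, one gets $\mathcal{N}_{star}=\sqrt{(\delta_1^{(m)})^{2/n}+(\delta_2^{(m)})^{2/n}}\ge\sqrt{(\delta_1^{(m)})^2+(\delta_2^{(m)})^2}>1$. (Incidentally, the published proof writes $\mathcal{N}_{star}=\sqrt{(\delta_1^{(m)})^2+(\delta_2^{(m)})^2}$, silently dropping the $1/n$ exponents; the monotonicity step above is precisely what makes that slip harmless.) With this substitution, the rest of your argument --- in particular the observation that the maximally entangled ancillary arms multiply both products by $1$ --- goes through verbatim.
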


\begin{proof} If $\rho_{A_1B}$ is an entangled pure state or an entangled mixed
state with $\delta_1=1$ and $\delta_2> 0$, by the bi-locality case we
have that $\rho_{AB}^{(m)}$ violates the CHSH inequality. Then
according to the Horodecki criterion, we have
$S_{AB}^{(m)max}=2\sqrt{(\delta_1^{(m)})^2+(\delta_2^{(m)})^2} > 2$.
On the other hand, $\rho_{A_2B}, \cdots ,\rho_{A_nB}$ are all
maximally entangled pure states. Therefore, for the quantum network
state $\rho= \rho_{AB}^{(m)}\otimes \rho_{A_2B}\otimes \cdots
\otimes\rho_{A_nB}$, Eq. (\ref{nns6}) can be expressed as $\mathcal
{N}_{star}=\sqrt{(\delta_1^{(m)})^2+(\delta_2^{(m)})^2} =
\frac{S_{AB}^{(m)max}}{2} > 1$. Moreover, the inequality (\ref{nns5}) is
violated, and the quantum network state $\rho$ has network nonlocality.
\end{proof}

\section{ Nonlocality sharing under multilateral measurements in star network quantum states}
A schematic diagram of this situation is shown in FIG. 8. Alice and
Charlie make sequential POVM measurements as described in Section B
in bilocality scenario. We get the sequential quantum states
$\rho_{A_1B}^{(k)}$ and $\rho_{A_2B}^{(l)}$, $k=1,2,\cdots,m$,
$l=1,2,\cdots,t$ and so on. Our problem is to identify the network
nonlocality of the quantum network state
$$\rho\equiv \rho_{A_1B}^{(m)}\otimes \rho_{A_2B}^{(l)}\otimes \cdots
\otimes \rho_{A_pB}^{(s)} \otimes \rho_{A_{p+1}B}\otimes\cdots
\otimes\rho_{A_nB}.$$
\begin{figure}[ptb]
\includegraphics[width=0.4\textwidth]{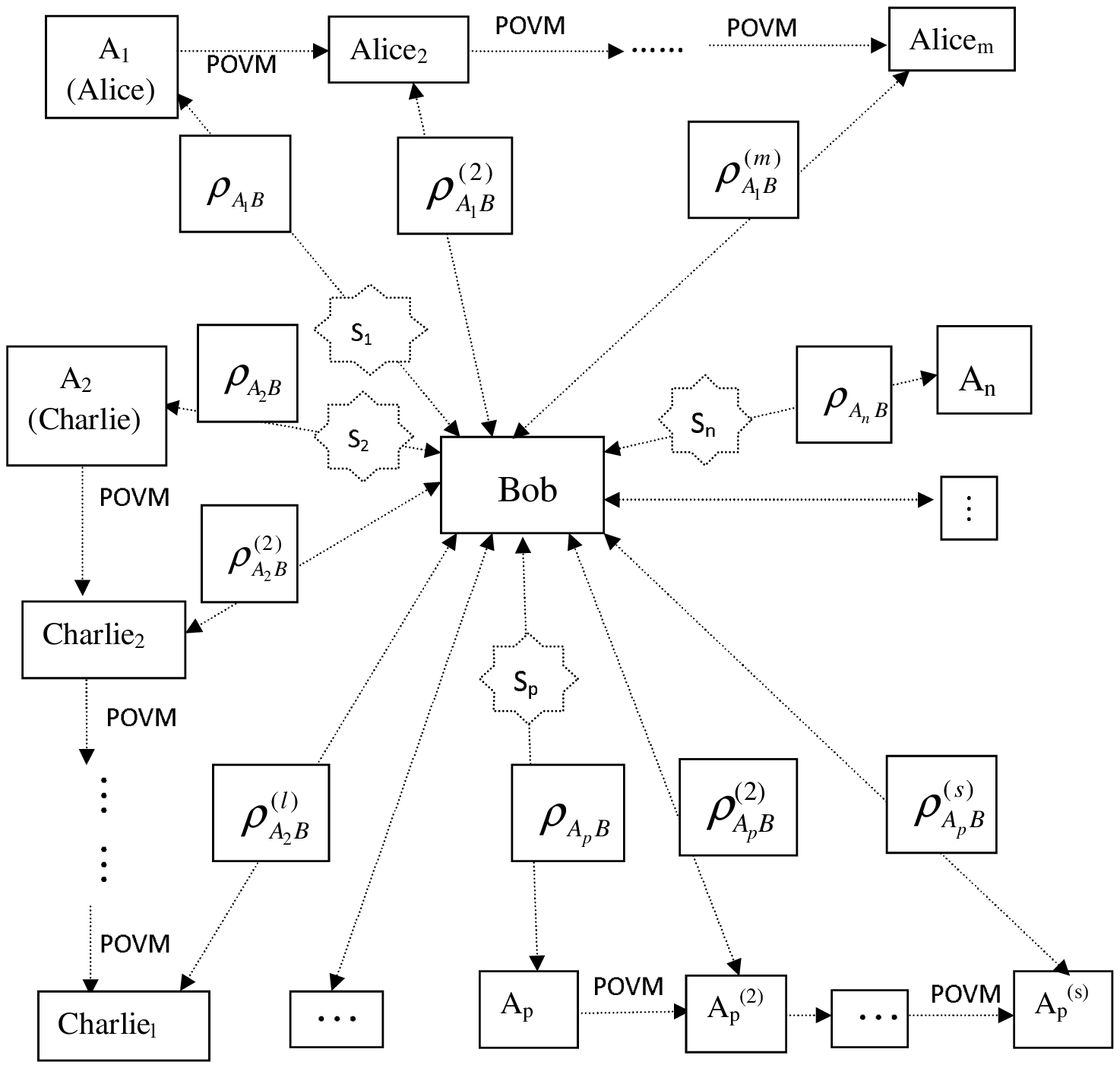}\caption{Nonlocality sharing under
multilateral measurements of star network quantum states. Here, we
choose $A_1(Alice),A_2(Charlie),\cdots, A_p$ for a series of
measurements. In this network structure and selective measurements,
the initial state is $\rho_{A_1B}\otimes \rho_{A_2B}\otimes \cdots
\otimes\rho_{A_nB}$ and the final state is $\rho_{A_1B}^{(m)}\otimes
\rho_{A_2B}^{(l)}\otimes \cdots
\otimes\rho_{A_pB}^{(s)}\otimes\cdots \otimes \rho_{A_nB}$.}
\label{Fig 8}%
\end{figure}

From the analysis on the Bilocality situation above and the
inequality (\ref{nns5}) and (\ref{nns6}) of the
nonlocality for the star quantum network, it is not difficult
to obtain the following conclusion.
\begin{theorem}
If $\rho_{A_1B}=\rho_{A_2B}=\cdots=\rho_{A_pB}$ is an entangled pure
state or an entangled mixed state with $\delta_1=1$ and $\delta_2
>0$, for $p=2,3,\cdots,n$, and $\rho_{A_{p+1}B}= \cdots
=\rho_{A_nB}=|\psi\rangle\langle\psi|$, then for each $ m,l,\cdots,s
\in \mathbb{N}$ the quantum network state $\rho\equiv
\rho_{A_1B}^{(m)}\otimes \rho_{A_2B}^{(l)}\otimes \cdots \otimes
\rho_{A_pB}^{(s)} \otimes \rho_{A_{p+1}B}\otimes\cdots
\otimes\rho_{A_nB}$ has network nonlocality.
\end{theorem}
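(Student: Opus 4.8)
The plan is to verify that the state $\rho$ violates the $n$-locality inequality (\ref{nns5}) by showing that the quantum value $\mathcal{N}_{star}$ of (\ref{nns6}) exceeds $1$. For each measured arm write $\delta_1^{(k)}\ge\delta_2^{(k)}$ for the two largest singular values of its correlation matrix after $k-1$ rounds, so that $t_r^{A_i}=(\delta_r^{A_i})^2$ in (\ref{nns6}). The first step is to evaluate (\ref{nns6}) for this particular $\rho$: the last $n-p$ arms are maximally entangled and contribute $t_1=t_2=1$, hence disappear from both products, leaving
\[
\mathcal{N}_{star}=\sqrt{\Big(\prod_{j=1}^{p}(\delta_1^{(k_j)})^2\Big)^{1/n}+\Big(\prod_{j=1}^{p}(\delta_2^{(k_j)})^2\Big)^{1/n}},
\]
where $(k_1,\dots,k_p)=(m,l,\dots,s)$ are the measurement depths on the first $p$ arms.

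The second step imports the unilateral bilocality analysis. Since $\rho_{A_1B}=\cdots=\rho_{A_pB}$ is pure (or mixed with $\delta_1=1$, $\delta_2>0$), the result of Ref.~\cite{pjbr} guarantees that each measured arm still violates CHSH at every depth, so by the Horodecki criterion $(\delta_1^{(k)})^2+(\delta_2^{(k)})^2>1$ for all $k$, while each $\delta_r^{(k)}\in[0,1]$ as a singular value of a genuine two-qubit correlation matrix. The crux is a domination estimate. Let $k^\ast=\max\{k_1,\dots,k_p\}$ be the deepest arm. Because all $p$ arms begin in the same state and run the same measurement schedule, a deeper arm is a shallower one followed by extra averaged-POVM maps; each such map multiplies the correlation matrix on Alice's side by a $3\times3$ matrix of operator norm at most $1$ (the map $\rho\mapsto\frac12\sum_{a,x}\sqrt{A_{a|x}}\,\rho\,\sqrt{A_{a|x}}$ is unital and trace preserving, hence contracts the Bloch ball), so by $\sigma_r(Mt)\le\|M\|\,\sigma_r(t)$ every singular value is non-increasing in $k$. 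Thus $\delta_r^{(k_j)}\ge\delta_r^{(k^\ast)}$ for all $j,r$, giving $\prod_{j}(\delta_r^{(k_j)})^2\ge(\delta_r^{(k^\ast)})^{2p}$; since $p\le n$ and $\delta_r^{(k^\ast)}\in[0,1]$ the elementary bound $x^{2p/n}\ge x^2$ then yields
\[
\mathcal{N}_{star}^2\ \ge\ (\delta_1^{(k^\ast)})^{2p/n}+(\delta_2^{(k^\ast)})^{2p/n}\ \ge\ (\delta_1^{(k^\ast)})^2+(\delta_2^{(k^\ast)})^2\ >\ 1 ,
\]
so $\mathcal{N}_{star}>1$, the inequality (\ref{nns5}) is violated, and $\rho$ is network nonlocal.

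The step I expect to be the main obstacle is the monotonicity underlying the domination, namely that an extra measurement cannot raise either leading singular value; this, together with the shared initial state and schedule, is precisely what forces the family $\{(\delta_1^{(k_j)},\delta_2^{(k_j)})\}_j$ to be totally ordered and dominated by the deepest arm. It is essential rather than cosmetic: for arms with unrelated states one could have one arm with large $\delta_1$ and tiny $\delta_2$ and another with the roles reversed, pushing both geometric means below $1$ while each arm still violates CHSH, so the conclusion would genuinely fail. I note finally that the balanced case $m=l=\cdots=s$ requires none of this, since then all arms are identical and $\mathcal{N}_{star}^2=(\delta_1)^{2p/n}+(\delta_2)^{2p/n}\ge(\delta_1)^2+(\delta_2)^2>1$ in a single line.
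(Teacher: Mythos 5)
Your proof is correct, but it is substantially more careful than what the paper actually offers: the authors give no proof of this theorem at all, merely asserting that it follows ``from the analysis on the bilocality situation'' together with Eqs.~(\ref{nns5})--(\ref{nns6}), and their template (the proof of Theorem 3) only handles the balanced case of equal measurement depths --- indeed Remark~1 concedes that the unbalanced case is ``difficult to judge.'' The implicit argument the paper is gesturing at --- each measured arm still violates CHSH, hence $(\delta_1^{(k_j)})^2+(\delta_2^{(k_j)})^2>1$ per arm, hence $\mathcal{N}_{star}>1$ --- does not by itself close the gap, precisely for the reason you identify: the geometric means in Eq.~(\ref{nns6}) can drop below $1$ for a collection of arms each of which violates CHSH (e.g.\ singular-value pairs $(1,0.05)$ and $(0.75,0.75)$ give $\mathcal{N}_{star}^2=0.75+0.0375<1$ for $n=p=2$). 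Your domination lemma --- that the averaged square-root-POVM update is a unital qubit channel, hence acts on the correlation matrix by a contraction $M$ with $\sigma_r(Mt)\le\|M\|\,\sigma_r(t)$, so both ordered singular values are non-increasing in depth and the deepest arm is dominated by all others --- is exactly the missing ingredient, and combined with $x^{2p/n}\ge x^2$ on $[0,1]$ it yields the unbalanced case cleanly. The only caveat worth flagging is that the domination step presupposes all $p$ arms run the same sharpness schedule $\{\gamma_k\}$ (so that a shallower arm is literally a truncation of a deeper one); that is the natural reading of the paper's setup, and one universal Brown--Colbeck sequence guaranteeing CHSH violation at every depth makes it harmless, but you should state it as a hypothesis. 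In short: your argument proves more than the paper substantiates, and the balanced case you dispose of in one line is the only case the paper's own methods genuinely cover.
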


\section{Conclusions and discussions}
Quantum nonlocality is a distinctive feature of quantum mechanics.
The nonlocal characteristics of quantum networks are more
complicated due to their non-convexity, nonlinearity, etc. The
research on the nonlocality sharing ability of quantum networks has
important theoretical significance for developments of such as
quantum repeaters. Since nonlocality is an important quantum
resource, starting from a nonlocally correlated quantum state, it
would be quite desirable to be able to maintain the nonlocality
under sequential measurements. In the bilocality case, we have shown
that under unilateral measurements, the nonlocality can be shared
under any times of sequential measurements. With the same
measurements and the same times of measurements on both sides of
Alices and Charlies, it is possible for arbitrarily many Alices and
Charlies to share the locality with a single Bob by using a pure or
mixed entangled state. We have also investigated the shareability of
star network nonlocality. It has been shown that from our
measurement schemes the nonlocality of the star network quantum
states can be shared under unilateral or multilateral measurements.

Our results may also highlight researches on sharing general
multipartite quantum nonlocalities \cite{cxwt} and other quantum
correlations such as quantum steerability \cite{sdsa}, entanglement
\cite{asau,adaa} and coherence \cite{sasm,hwfh}. Our approach may
also suggest the related applications in randomness generation
\cite{fmrm}, quantum teleportation \cite{sasa} and random access
codes \cite{kanb}. It would be also interesting to explore the
nonlocality sharing ability of high-dimensional \cite{ztfs,ztlh} or
multipartite quantum network states in other network scenarios.
Moreover, we have constructed the dichotomic POVM measurement
operators in terms of the Pauli operators. As the Pauli operators
are easily implemented in experiments, the POVM operators we
constructed may have potential advantages in some specific
experimental implementations \cite{mzxc,mlmg,tcym}.

\bigskip
\section*{Acknowledgments}
This work is supported by the National Natural Science Foundation of China (NSFC) under Grant Nos. 12126314, 12126351, 11861031, 12075159 and 12171044; the Hainan Provincial Natural Science Foundation of China under Grant Nos.121RC539, the specific research fund of the Innovation Platform for Academicians of Hainan Province under Grant No. YSPTZX202215,
Beijing Natural Science Foundation (Grant No. Z190005); Academy for Multidisciplinary Studies, Capital Normal University; Shenzhen Institute for Quantum Science and Engineering, Southern University of Science and Technology (No. SIQSE202001).

\end{document}